\theoremstyle{thmstyleone}%
\newtheorem{theorem}{Theorem}
\newtheorem{lemma}[theorem]{Lemma}
\begin{document}

\title[Fractality in resistive circuits
]{Fractality in resistive circuits: The Fibonacci resistor networks}

\author*[1]{\fnm{Petrus} H. R. \sur{dos Anjos}}\email{petrus@ufcat.edu.br}
\equalcont{These authors contributed equally to this work.}
\author[2,3]{\fnm{Fernando} A. \sur{Oliveira}}
\equalcont{These authors contributed equally to this work.}
\email{faooliveira@gmail.com}
\author[3]{\fnm{David} L. \sur{Azevedo}}\email{david888azv@gmail.com}
\equalcont{These authors contributed equally to this work.}

\affil*[1]{Instituto de F\'isica, Universidade Federal de Catal\~ao, CEP 13560-970, Catal\~ao, GO - Brazil}

\affil[2]{Instituto de F\'isica, Universidade Federal Fluminense, CEP 24210-340, Niter\'oi, RJ - Brazil}
\affil[3]{Instituto de F\'isica, Universidade de Bras\'ilia, 70910-900, Bras\'ilia, DF - Brazil}

\abstract{
 We propose two new kinds of infinite resistor networks based on the Fibonacci sequence: a serial association of resistor sets connected in parallel (type 1) or a parallel association of resistor sets connected in series (type 2). We show that the sequence of the network's equivalent resistance converges uniformly in the parameter $\alpha=\frac{r_2}{r_1} \in [0,+\infty)$, where $r_1$ and $r_2$ are the first and second resistors in the network. We also show that these networks exhibit self-similarity  and  scale invariance, which  mimics a self-similar fractal. 
 We also provide some generalizations, including resistor networks based on high-order Fibonacci sequences and other recursive combinatorial sequences.}


\keywords{Fibonacci sequence, infinite resistors networks, self-similarity.}

\maketitle

\section{Introduction}

 The well known Fibonacci sequence $1,1,2,3,5,8\cdots$  intrigued and inspired
people through the centuries to delve more deeply into the recurring patterns of the physical world. In physics and chemistry, there is an increasing interest in the Fibonacci sequence since the discovery of quasicrystals \cite{bib:quasicrystal}.  

A Fibonacci number is obtained by a recursive procedure, namely adding the two previous Fibonacci numbers, i.e $f_1 = 1, f_2=1$ and $f_i = f_{i-1} + f_{i-2}$ thereafter. This idea was explored in other contexts. For example, a Fibonacci word is a specific sequence of symbols from any two-letter alphabet. The $n$th Fibonacci word is formed by repeated concatenation of two previous words, such that $W_n = W_{n-2} W_{n-1}$, so for the alphabet $A,B$, we have the sequence of words $A,B, AB, BAB, ABBAB, BABABBAB\cdots$  Another example, DNA segments can be arranged in a quasi-periodic Fibonacci sequence \cite{Azevedo16}. The evolution of a deterministic cellular automata with specific initial conditions can generate a Fibonacci fractal in the space-time. ~\cite{Devakul19,Anjos21} associated these fractals with the Fractal Symmetric Phases of Matter (FSPM). These FSPM appears in topological insulators\cite{Devakul19,Acosta19}. Recent results suggest that these strange objects are indeed very common in nature~\cite{Vergniory22}.

Resistance is in essence a response function which is a form of fluctuation-dissipation relation or fluctuation-dissipation theorem (FDT).  The FDT is a weak theorem and it fails in many situations such as in structural glass\cite{Grigera99,Crisanti03,Barrat98,Bellon02}, in proteins~\cite{Hayashi07}, in  mesoscopic radiative heat transfer~\cite{Perez-Madrid09,Averin10} and as well in ballistic diffusion~\cite{Lapas08,Costa03,Costa06,Lapas07}. For growth phenomena such as those described by the Kadar-Parisi-Zhang equation~\cite{Kardar86} different formulations of the FDT in $1+1$ dimensions have been proposed~\cite{Wio17,Rodriguez19,GomesFilho21}.
 For $2+1$ dimensions, the FDT fails~\cite{Kardar86}. Recentely, Anjos et al~\cite{Anjos21}  proposed that the growth dynamics builds up an interface with a fractal dimension $d_f$. This is particularly easy to see in the etching mechanism~\cite{Mello01,Rodrigues14,Alves16,Rodrigues24}, where an acid corrodes an initially crystalline surface creating a rough interface. This modifies the FDT, which should be now analyzed  in a fractal geometry and not in an Euclidean one~\cite{Anjos21}. This allowed a possible solution for the KPZ exponents. They have obtained for the roughness exponent~\cite{GomesFilho21b} $\alpha=(d_f+1)^{-1}$, what along with~\cite{Barabasi95} $\alpha=d-d_f$ yields, for $d=2$, $d_f=\varphi= \frac{\sqrt{5}+1}{2}$, i.e. the golden ratio. This allowed to obtain all the exponents~\cite{GomesFilho21b} and the fractal dimension-\cite{Luis22,Luis23}. Anomalous Diffusion is a Basic Mechanism for the Evolution of Inhomogeneous Systems\cite{Oliveira19}, such as diffusion in random fractals~\cite{Reis96,Reis24}. {Furthermore, the presence of fractals in phase transition has been well recognized~\cite{Coniglio89,Lima24}.}

Motivated by the aforementioned results, we study resistors networks constructed based upon a Fibonacci like sequence where each branch in series or parallel is a Fibonacci generation for each type of network. We can then obtain two kinds of networks: type $1$ is a serial association of resistors in parallel, see figure~\ref{fig:type1}, and type $2$ is a parallel association of resistors in series, see figure~\ref{fig:type2}. These type of network can be approximated in laboratory using for example polymers arranged  in a proper sequence. The interplay between the Fibonacci sequence, the golden ratio $\varphi$, and infinite ideal resistive circuits always have attracted attention. Currently, it spans all levels of academia from textbook exercises to research papers (e.g.  \cite{bib:Srinivasan} ). The Fibonacci Resistor Networks (FRN) presented here are an interesting and non-trivial problem that could be used in Physics and Mathematics teaching \cite{bib:Zemanian,bib:Atkinson}. Nevertheless, materials science and engineering applications are expected, specially since the FRN has a fractal-like pattern. Infinite circuits that have fractal-like patterns have been subject to extensive analysis due to their interesting  properties   \cite{bib:Chen,bib:Boyle}. Also, electrical properties of percolation clusters \cite{bib:Clerc} and the electric response of inhomogeneous materials can be investigated with fractal-like networks \cite{bib:Clerc2}.

\begin{figure*}
\begin{minipage}{0.3\linewidth}
\centering
    \includegraphics[scale=0.5]{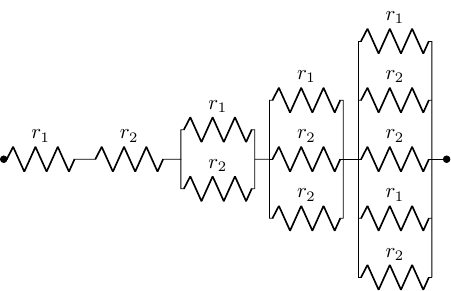}
    \captionsetup{width=\linewidth}
    \caption{Fibonacci resistor networks, type 1. Four Fibonacci generations of a serial association of resistors in parallel.}
    \label{fig:type1}
\end{minipage}
\hspace{0.02\linewidth}
\begin{minipage}{0.3\linewidth}
    \centering
    \includegraphics[scale=0.5]{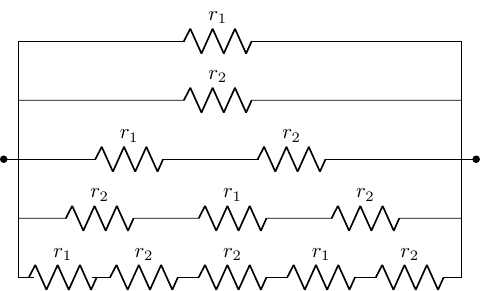}
    \caption{Fibonacci network, type 2. Four Fibonacci generations of parallel association of resistors in series.}
    \label{fig:type2}
\end{minipage}
\hspace{0.02\linewidth}
\begin{minipage}{0.3\linewidth}
    \centering
   \includegraphics[scale=0.5]{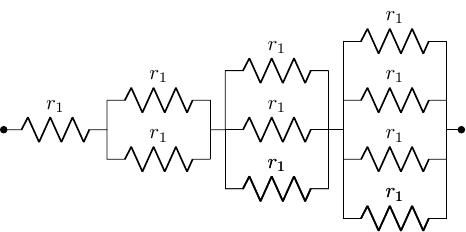}
    \caption{The EqR of this series network of parallel associations of an arithmetic progression of $r_1$ resistors  diverges logarithmically.}
    \label{fig:divergence}
\end{minipage}
\end{figure*}

All concepts investigated here could be easily extended to the elastic deformation of materials, where the elastic properties could be mimicked for a set of springs in serial or parallel association. Moreover, in some systems, such as in the fuse model, the rapture of spring bonds in an Euclidean lattice, creates a fractal structure at interface \cite{Arcangelis89,Hansen91}. As well in phenomena such as etching of interfaces \cite{Gomes19} and in growth models, one can start with a crystalline lattice and end up with a fractal structure~\cite{Anjos21,GomesFilho21b,GomesFilho24}.

Various aspects of both finite and infinite resistor networks have been studied previously. For example, \cite{ref1,ref2,ref3} studies the equivalent resistance (EqR) of all possible combinations of $n$ equal resistor, found lower and upper bounds for the EqR and a power law for the equivalent resistor independent of if this association are planar or not. In another examples, inifite resistor networks were studied using Lattice Green function methods \cite{Asad05,Asad06,hijjawi2008infinite,asad2013infinite}.  We show that the EqR of the Fibonacci circuits studied here is given by the series of generalized Fibonacci numbers' reciprocals. The interest in series of reciprocals dates back at least before Euler  \cite{bib:Euler}. It is worth noting that not all recursively generated resistor network lead to a convergent EqR; a simple example is an arithmetic resistor network shown in figure \ref{fig:divergence} with EqR diverges as $n\rightarrow + \infty$ since the harmonic series $\sum \frac{1}{n}$ diverges logarithmically.

In this paper, we analyze Type 1 and Type 2 FRN, calculate the EqR of each generation and show that the EqR of the entire network is proportional to the sum of the reciprocal of generalized Fibonacci numbers. We discuss how this FRN networks exhibit self-similarity (a fractal-like behavior) and how the EqR changes as the scale of the fractal changes. We discuss some possible generalizations to FRN, in particular, we discuss a p-order FRN and other resistor networks based on other recursive combinatorial sequences such as Pell sequence and Bell numbers.

\section{Type 1 Fibonacci Resistor Networks}\label{sec:type1}

In this section, we consider the properties of the Type 1 FRN. Let $r_2 = \alpha r_1$, and let $r_n$ denote the $n$-th generation EqR.  From figure \ref{fig:type1-ngen}, we see that for $\alpha \neq 0$
\begin{equation}\label{eq:type1-nth}
\frac{1}{r_n} = \frac{1}{r_{n-1}} + \frac{1}{r_{n-2}}.
\end{equation}
Writing $\frac{f_n}{r_1} = \frac{1}{r_n}$, we see that $f_n = f_{n-1} + f_{n-2}$ ($n \geqslant 3$), so the sequence of $f_n$ (the reciprocal of the $n$th EqR) is a Fibonacci sequence with $f_1 = 1$ and $f_2 = \alpha^{-1}$.

\begin{figure}[ht]
\begin{minipage}{0.45\linewidth}
    \centering
    \includegraphics[scale=0.6]{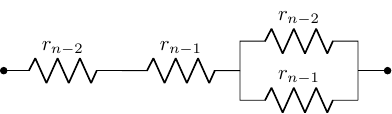}
    \caption{In a type 1 FRN the $n$th resistor is a parallel association of the two previous resistors.}
    \label{fig:type1-ngen}
\end{minipage}
\hspace{0.05\linewidth}
\begin{minipage}{0.45\linewidth}
\centering
  \includegraphics[scale=0.6]{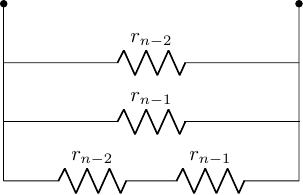}
  \caption{In a type 2 FRN the $n$th resistor is a serie association of the two previous resistors.}
 \label{fig:type2-ngen}
 \end{minipage}
\end{figure}

A Fibonacci sequence has a closed form for the $n$th term (as any sequence given by a linear recurrence relation). In fact, the ansatz  $f_n = Ax^{n-1}$, implies that $x$ is a root of $x^2 - x -1 = 0$. We have two roots
$$ \varphi = \frac{1 + \sqrt{5}}{2}\,\,\,\text{and}\,\,\,\,\psi = - \varphi^{-1} = \frac{1 - \sqrt{5}}{2}$$
and the linearity of the recurrence relation implies that
 \(f_n = A \varphi^{n-1} + B \psi^{n-1}\). We fix $A$ and $B$ to obtain $f_1=1$ and $f_2= \alpha$. Therefore, we have a Binet formula
\begin{equation}\label{eq:general-term-fibonacci}
f_n(\alpha) = \left[\frac{1}{\alpha ( \varphi - \psi )}\right] \left[ \left( 1 - \alpha \psi  \right) \varphi^{n-1} - \left(1 - \alpha \varphi \right)\psi^{n-1}\right].
\end{equation}
A quick inspection on this expression clearly shows that $f_n(\alpha) \rightarrow 0$ for $\alpha \neq +\infty$, so that $r_n(\alpha) \rightarrow 0$. Therefore $R_n^{(1)} (r_1,\alpha r_1)$, the EqR of the entire type 1 network up to the $n$th generation is 
$$ R_n^{(1)} (r_1,\alpha r_1) = \sum_{k=1}^n r_k (\alpha) = \sum_{k=1}^n \frac{r_1}{f_k (\alpha)} = r_1 {\cal F}_n (\alpha).$$
We will show that ${\cal F}_n$ converges uniformly for $\alpha \geqslant 0$. So we can write
$$ R^{(1)}_{eq} (r_1,r_2) = r_1 {\cal F} \left(\frac{r_2}{r_1}\right).$$
The $r_2=0$ case is trivial.

\section{Type 2 Fibonacci Resistor Networks}\label{sec:type2}

For a type 2 FRN, the equivalent resistor of the $n$th generation is $r_n = r_{n-1} + r_{n-2}$ so each generation is a Fibonacci sequence (see figure \ref{fig:type2-ngen}). For uniformity, let \( r_n = r_1 f_n\), so we get $f_1 = 1$ and $f_2 = \beta$, and the $n$-th term is given by Eq. \ref{eq:general-term-fibonacci} using $\alpha = \frac{1}{\beta}$.
The EqR such the network up to the $n$th generation is
$$ R^{-1}_n\left(r_1,\frac{r_1}{\beta}\right) = \sum_{k=1}^{n} \frac{1}{r_k} = \frac{1}{r_1} \sum_{k=1}^{n} \frac{1}{f_k} = \frac{1}{r_1}{\cal F}_n \left( \frac{1}{\beta} \right)$$

The EqR formula for Type 2 FRN follows from the Type 1 FRN formula. Consider the case where $\alpha$ is a positive rational number (so $r_2$ can be replaced by an association $r_1$ resistors). Given a circuit constructed from $n$ equal resistors $r_1$ in series and/or parallel combination that has an EqR $\beta r_1$, the configuration obtained by changing all serial connections to parallel and parallel connections to serial connections respectively,
results in a circuit, whose EqR is $\frac{1}{\beta} r_1$ (Theorem 1 of \cite{bib:Khan}). Now, the uniform convergence of ${\cal F}(\alpha)$ for $\alpha > 0$ can be used to extend this result to all $\alpha > 0$ since $\mathbb{Q}^+$ is dense in $\mathbb{R}^+$.

\section{Properties of ${\cal F}$} \label{sec:properties}

In this section, we obtain the properties of ${\cal F}$. It is convenient to write
\begin{equation}\label{eq:f-decomposition}
f_n(\alpha) = \frac{A_n}{\alpha} + B_n\end{equation} 
where
$$A_n = \frac{ \varphi^{n-1} - \psi^{n-1}}{( \varphi - \psi )}\,\,\,\, \text{and}\,\,\,\, B_n = \frac{\varphi \psi^{n-1} - \psi \varphi^{n-1} }{( \varphi - \psi )}.$$ 
Now, we define a generalized Fibonacci sequence (GFS) to be the sequence
\begin{equation}\label{eq:gibonacci}
g^{(x,y)}_1 = x,\,\,\,\,\,g^{(x,y)}_2 = y,\,\,\,\,\,g^{(x,y)}_n = g^{(x,y)}_{n-1} +   g^{(x,y)}_{n-2},\,\,\,\forall n\geqslant 3.    
\end{equation}
Fixing $x=1$, $y=1$, we recover the standard Fibonacci sequence $f_n(1)$.  The sequence $f_n(\alpha)$ is a GFS with $x=1, y=\frac{1}{\alpha}$. To analyse the convergence of ${\cal F}$ we need the following lemma:

\begin{lemma}\label{l:inequalities} The following statements holds
	\begin{enumerate}
		\item[(a)] $A_n= g_n^{(0,1)}$ and for $n\geqslant 1$, $A_{n} = f_{n-1}(1)$.
		\item[(b)] $B_n = g_n^{(1,0)}$  and for $n\geqslant 2$, $B_{n} = f_{n-2}(1)$.
		\item[(c)] $1 \leqslant\frac{A_n}{B_n} \leqslant 2$, $\forall n>2$.
		\item[(d)] Let $\Phi = \left\{ -\frac{A_k}{B_k} : k> 2 \right\}$, then $\Phi \subset [-2,-1]$.
	\end{enumerate}	
\end{lemma}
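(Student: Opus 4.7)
The plan is to handle parts (a) and (b) by observing that both $\varphi$ and $\psi$ are roots of $x^2=x+1$, so $A_n$ and $B_n$ are linear combinations of $\varphi^{n-1}$ and $\psi^{n-1}$ and therefore both satisfy the Fibonacci recurrence $u_n = u_{n-1}+u_{n-2}$. Then I would just evaluate $A_1, A_2$ and $B_1, B_2$ directly from the given closed forms: a small computation gives $A_1=0$, $A_2=1$, $B_1=1$, $B_2=0$. By uniqueness of sequences satisfying a second-order linear recurrence with given initial data, $A_n=g_n^{(0,1)}$ and $B_n=g_n^{(1,0)}$. To identify these with shifts of the standard Fibonacci sequence $f_n(1)$, I would extend the Fibonacci recurrence backwards one step to define $f_0(1)=f_2(1)-f_1(1)=0$; then $(A_1,A_2)=(f_0(1),f_1(1))$ and $(B_1,B_2)=(f_{-1}(1),f_0(1))=(1,0)$, so the recurrence forces $A_n=f_{n-1}(1)$ for $n\geqslant 1$ and $B_n=f_{n-2}(1)$ for $n\geqslant 2$.

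For part (c), I would use (a)--(b) to rewrite
\[
\frac{A_n}{B_n} \;=\; \frac{f_{n-1}(1)}{f_{n-2}(1)} \qquad (n>2),
\]
i.e.\ the ratio of two consecutive standard Fibonacci numbers. Writing $q_k = f_{k+1}(1)/f_k(1)$, the Fibonacci recurrence yields
\[
q_{k+1} \;=\; \frac{f_{k+2}(1)}{f_{k+1}(1)} \;=\; 1 + \frac{f_k(1)}{f_{k+1}(1)} \;=\; 1 + \frac{1}{q_k}.
\]
I would then prove $q_k\in[1,2]$ by induction on $k\geqslant 1$. The base case is $q_1=1$. For the step, if $q_k\in[1,2]$ then $1/q_k\in[1/2,1]$, so $q_{k+1}=1+1/q_k\in[3/2,2]\subset[1,2]$. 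Since $A_n/B_n=q_{n-2}$ for $n\geqslant 3$, this gives the claimed bound. Part (d) is then immediate: for each $k>2$, $-A_k/B_k\in[-2,-1]$, so $\Phi\subset[-2,-1]$.

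There is no real obstacle here; the only subtlety worth being careful about is the index bookkeeping and the use of the backward extension $f_0(1)=0$, which must be declared explicitly so that the identifications in (a) and (b) hold down to their minimal $n$. Everything else is routine verification plus a one-line induction.
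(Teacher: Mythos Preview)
Your proposal is correct and follows essentially the same route as the paper: parts (a)--(b) are obtained by checking initial values and invoking the recurrence (the paper phrases this as ``follows from \eqref{eq:f-decomposition} and \eqref{eq:gibonacci}''), and for (c) the paper writes $f_{n-1}(1)/f_{n-2}(1)=1+f_{n-3}(1)/f_{n-2}(1)\leqslant 2$ using monotonicity of the Fibonacci sequence, which is exactly your recursion $q_{k+1}=1+1/q_k$ together with $q_k\geqslant 1$. Your version is simply more explicit about the index bookkeeping and the induction, which is an improvement in clarity but not a different argument.
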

\begin{proof}
	(a) and (b) Follows from \ref{eq:f-decomposition} and  \ref{eq:gibonacci}.\\
		(c) and (d):  From (a) and (b),
		we just need to calculate an upper bounds for  
		 $1 \leqslant \frac{f_{n-1}(1)}{f_{n-2}(1)}$. Using that $f_{n-1}(1) = f_{n-2}(1)+f_{n-3}(1)$ and since $\frac{f_{n-3}(1)}{f_{n-2}(1)} \leqslant 1$, we find that
		 $1 \leqslant  \frac{f_{n-1}(1)}{f_{n-2}(1)} \leqslant 2$. (d) follows from (c).
\end{proof}

We now can apply the ratio test to analyze the pointwise convergence of the series ${\cal F}_n$. 

\begin{lemma}
	\(\sum_{k=1}^{+\infty} \frac{1}{f_k(\alpha)}\)
	converges pointwisely for all $\alpha \in \mathbb{R} - \Phi$.
\end{lemma}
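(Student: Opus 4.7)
The plan is to apply the ratio test to the series $\sum_{k=1}^{+\infty} 1/f_k(\alpha)$. First, I would verify that every term is well-defined: writing $f_k(\alpha) = (A_k + \alpha B_k)/\alpha$, the $k$-th term vanishes precisely when $\alpha = -A_k/B_k$, so the hypothesis $\alpha \notin \Phi$ rules out exactly these bad values for $k > 2$; the initial cases $f_1(\alpha) = 1$ and $f_2(\alpha) = 1/\alpha$ are handled directly (with $\alpha = 0$ set aside as the trivial $r_2 = 0$ case).

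Next, I would extract the leading-order asymptotic behavior of $f_n(\alpha)$ from the Binet formula \ref{eq:general-term-fibonacci}. Because $|\psi| = \varphi^{-1} < 1 < \varphi$, the $\psi^{n-1}$ contribution is exponentially suppressed, so
$$ f_n(\alpha) \sim \frac{(1 - \alpha\psi)\,\varphi^{n-1}}{\alpha(\varphi - \psi)} \quad \text{as } n \to \infty, $$
provided the leading coefficient $(1 - \alpha\psi)$ is nonzero. Dividing both the numerator and denominator of $f_n(\alpha)/f_{n+1}(\alpha)$ by $\varphi^{n-1}$ and letting $n \to \infty$ then yields
$$ \lim_{n \to \infty} \left|\frac{1/f_{n+1}(\alpha)}{1/f_n(\alpha)}\right| \;=\; \lim_{n \to \infty} \left|\frac{f_n(\alpha)}{f_{n+1}(\alpha)}\right| \;=\; \frac{1}{\varphi} \;<\; 1, $$
and the ratio test delivers absolute, hence pointwise, convergence at each such $\alpha$.

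The main subtlety I anticipate is the borderline value $\alpha = 1/\psi = -\varphi$, at which the leading coefficient $(1-\alpha\psi)$ vanishes and the naive ratio argument collapses. Using Lemma \ref{l:inequalities}, one has $-A_k/B_k = -f_{k-1}(1)/f_{k-2}(1)$, a rational number tending to $-\varphi$ without ever equaling it, so $-\varphi$ is a limit point of $\Phi$ but not a member. One would therefore either handle this exceptional point separately (direct inspection shows $f_n(-\varphi) \sim C\psi^{n-1} \to 0$), or restrict attention to the physically relevant regime $\alpha \geq 0$, where the issue does not arise. Away from this borderline, the ratio test yields pointwise convergence throughout $\mathbb{R} - \Phi$.
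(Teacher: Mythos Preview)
Your approach is the same as the paper's: verify that $\alpha\notin\Phi$ makes every $f_k(\alpha)$ nonzero, dispose of $\alpha=0$ trivially, and then apply the ratio test via the Binet formula to get $\bigl|f_n/f_{n+1}\bigr|\to 1/\varphi<1$. The paper's proof is in fact terser and less careful than yours; in its body it silently restricts to ``$\alpha>0$'' even though the lemma is stated on all of $\mathbb{R}\setminus\Phi$.

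You go further than the paper by isolating the borderline value $\alpha=-\varphi=1/\psi$, and you are right that it is a genuine problem: since $-A_k/B_k=-f_{k-1}(1)/f_{k-2}(1)$ is rational for every $k$, the irrational $-\varphi$ lies in $\mathbb{R}\setminus\Phi$ (it is a limit point of $\Phi$, not a member). But your parenthetical resolution is backwards. The observation $f_n(-\varphi)\sim C\,\psi^{n-1}\to 0$ is correct, yet it forces $1/f_n(-\varphi)\to\infty$, so the terms do not even tend to zero and the series \emph{diverges} at $\alpha=-\varphi$. This is not a case that can be ``handled separately'' in favor of convergence; it shows the lemma, read literally over $\mathbb{R}\setminus\Phi$, is false at that single point. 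The honest statement is convergence on $\mathbb{R}\setminus\bigl(\Phi\cup\{-\varphi\}\bigr)$, and in particular on the physically relevant range $\alpha\geqslant 0$, which is all the paper actually uses (Theorem~\ref{th:uniform} is stated only on $[0,+\infty)$). Your ratio-test argument is complete and correct on that domain.
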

\begin{proof}
	For $\alpha =0$, $f^{-1}_n = 0$ for $n\geqslant 2$, so ${\cal F}_n(0) \rightarrow 1$.	For $\alpha \not \in \Phi$, lemma \ref{l:inequalities}.(c) shows that $f_k \neq 0$ and we note that since $|\psi|<1$ then $ \psi^n \Rightarrow 0$ as $n \rightarrow + \infty$. So for $\alpha > 0$, we find that \( \frac{f^{-1}_{n+1}}{f^{-1}_{n}} = \frac{f_{n}}{f_{n+1}}  \Rightarrow  \frac{1}{\varphi} < 1 \) as $n\rightarrow 0$ and by the ratio test the series converges pointwisely for $\alpha \neq 0$.
\end{proof}

We now show that the convergence is uniform. 

\begin{theorem}\label{th:uniform} $\sum_{k=1}^{+\infty} \frac{1}{f_k(\alpha)}$ converges uniformly in $[0,+\infty)$.
	\end{theorem}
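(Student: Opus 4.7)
The plan is to apply the Weierstrass M-test to the tail $\sum_{k\geqslant 3}1/f_k(\alpha)$. I cannot apply it to the whole series, because the second term $1/f_2(\alpha)=\alpha$ (and therefore the pointwise limit $\mathcal{F}(\alpha)$ itself) is unbounded on $[0,+\infty)$. However, uniform convergence is a property of the tail alone, so prepending the two fixed functions $1/f_1(\alpha)=1$ and $1/f_2(\alpha)=\alpha$ to a uniformly convergent tail does not spoil uniform convergence of the partial sums.

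For $k\geqslant 3$, I will use the decomposition (\ref{eq:f-decomposition}) to write $1/f_k(\alpha)=\alpha/(A_k+\alpha B_k)$. Lemma~\ref{l:inequalities}(a),(b) gives $A_k=f_{k-1}(1)\geqslant 1$ and $B_k=f_{k-2}(1)\geqslant 1$ for $k\geqslant 3$, so simply discarding the nonnegative $A_k$ in the denominator yields the uniform bound $1/f_k(\alpha)\leqslant 1/B_k$ valid for every $\alpha\in[0,+\infty)$ (the endpoint $\alpha=0$ is trivial, since the left-hand side vanishes there). The majorising series $\sum_{k\geqslant 3}1/B_k$ is the classical reciprocal Fibonacci series, which converges by comparison with a geometric series of ratio $1/\varphi<1$ via the Binet asymptotic $B_k\sim\varphi^{k-2}/\sqrt{5}$.

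The Weierstrass M-test then delivers uniform convergence of $\sum_{k=3}^{\infty}1/f_k(\alpha)$ on $[0,+\infty)$, and reinstating the two leading terms $1+\alpha$ finishes the proof. The only real subtlety is this very first step: because $1/f_2(\alpha)=\alpha$ has no finite uniform bound, one must resist the temptation to majorise the whole series term by term and instead use the characterisation of uniform convergence through the uniform smallness of the tail. Once attention is shifted past $k=2$, the remainder is a direct estimate built from the inequalities already proved in Lemma~\ref{l:inequalities}.
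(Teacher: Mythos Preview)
Your proof is correct and follows essentially the same route as the paper: split off the first two terms, bound the tail by $1/B_k$ via $\alpha/(A_k+\alpha B_k)\leqslant 1/B_k$, and apply the Weierstrass M-test using the convergence of the reciprocal Fibonacci series. In fact, your indexing (starting the tail at $k=3$) is more careful than the paper's, which writes the tail from $k=2$ even though $B_2=0$; your explicit remark on why the $k=2$ term must be split off is a genuine clarification.
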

\begin{proof}
Note that 
$${\cal F}(\alpha) = 1 + \alpha + \sum_{k=2}^{+\infty} \frac{\alpha}{A_k + \alpha B_k}.$$
Clearly if $G(\alpha)=\sum_{k=2}^{+\infty} \frac{\alpha}{A_k + \alpha B_k}$ converges uniformly then ${\cal F}$ converges uniformly, since for $n,m \geqslant 2$, we have that  \(\left|{\cal F}_n - {\cal F}_m \right| = \left| G_n(\alpha) - G_m(\alpha)\right|\)
where $G_n(\alpha) = \sum_{k=2}^{+n} \frac{\alpha}{A_k + \alpha B_k}$. So we just need to show that $G$ converges uniformly. Now from Lemma \ref{l:inequalities}, $B_k > 0$, for all $k\geqslant 2$, so for $\alpha \geqslant 0$
$$ \frac{\alpha}{A_k + \alpha B_k} \leqslant \frac{1}{B_k}.$$
By the ratio test, the series $\sum_{k=2}^{+\infty} B_k^{-1}$ converges absolutely. So the sequence $G_n$ converges uniformly by Weierstrass M test \cite{bib:Rudin}, and therefore so does ${\cal F}$.
\end{proof}

Figure \ref{fig:f-initial} shows the behavior of the EqR of a Type 1 FRN. Note that for each $\alpha$ the graphic behaves as $1+\alpha$ for the initial generations and then approaches the convergence value. Futhermore, we see that ${\cal F}(\alpha) -(1+\alpha) \rightarrow  {\cal F}\left(\frac{1}{2}\right)$ as $\alpha \rightarrow +\infty$ as physically expected. The reasoning is the following, as $r_2 = \alpha r_1 \rightarrow +\infty $, then $r_3 \rightarrow r_1$ and $r_4 \rightarrow \frac{r_1}{2}$ so eliminating $r_1,r_2$ we remain with a Type 1 FRN with $\alpha=\frac{1}{2}$. A similar direct physical reasoning gives us the behavior of the $\alpha$ dependency of ${\cal F}^{-1} (\alpha^{-1})$ shown in figure \ref{fig:type2-result}. 

Our final result in this section, show that the EqR of the type 1 FRN is strictly increasing. Since ${\cal F}(\alpha)$ converges uniformly for $\alpha >0$, we have that 
$$ \frac{\mathrm{d}}{\mathrm{d}\alpha } {\cal F}(\alpha) = 1 + \sum_{k=3}^{+\infty} \frac{\mathrm{d}}{\mathrm{d}\alpha}\left(\frac{1}{f_k} \right) = 1 + \sum_{k=3}^{+\infty}   \frac{A_k}{\left( A_k + \alpha B_k \right)^2}\geqslant 1$$
since by Lemma \ref{l:inequalities}, $A_k \geqslant 0$.

\begin{figure*}[ht]
\begin{minipage}{0.45\linewidth}
    \centering
    \includegraphics[scale=0.4]{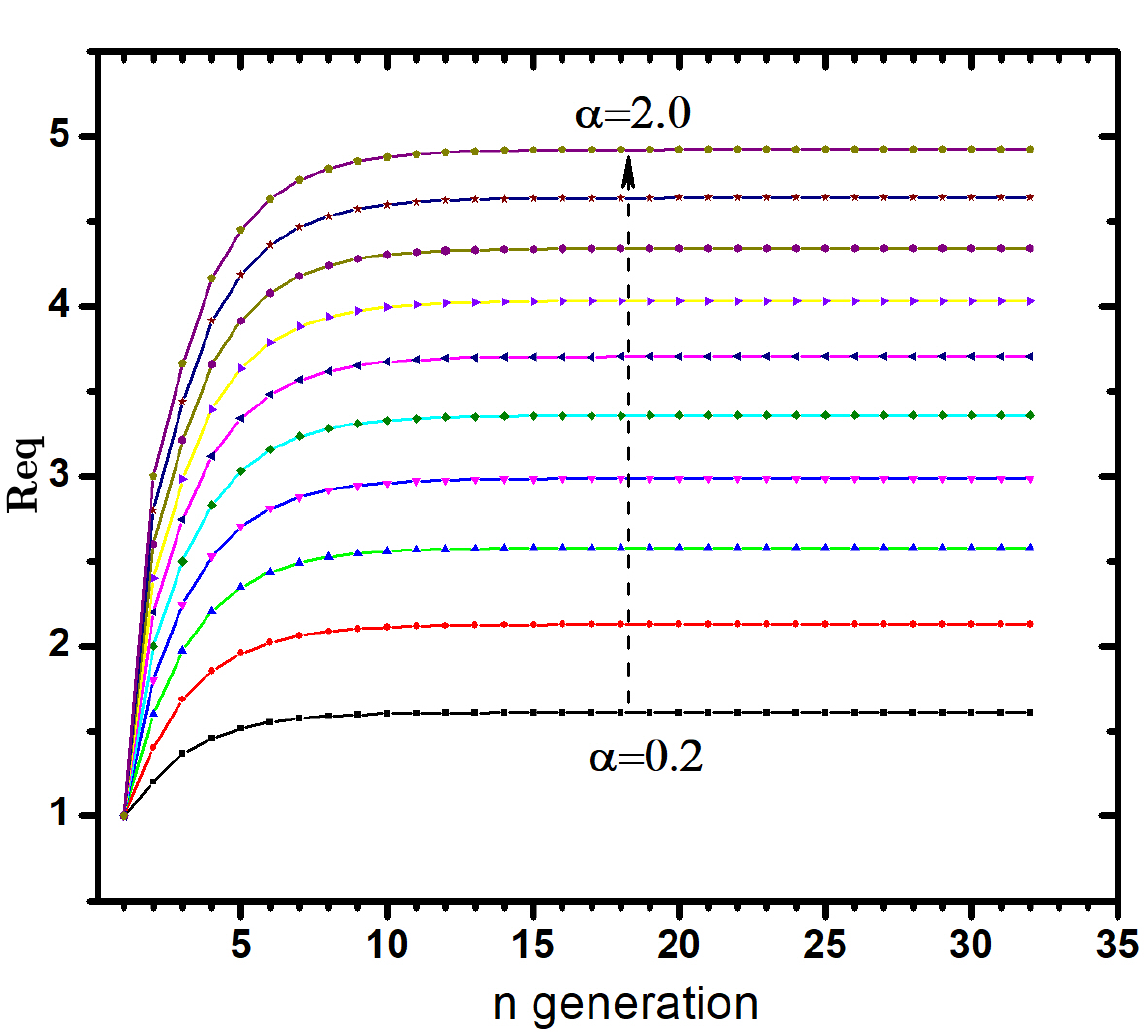}
    \caption{Behavior of the EqR of a Type 1 FRN with $r_1=1$.}
    \label{fig:f-initial}
\end{minipage}
\hspace{0.05\linewidth}
\begin{minipage}{0.45\linewidth}
    \centering
    \includegraphics[scale=0.4]{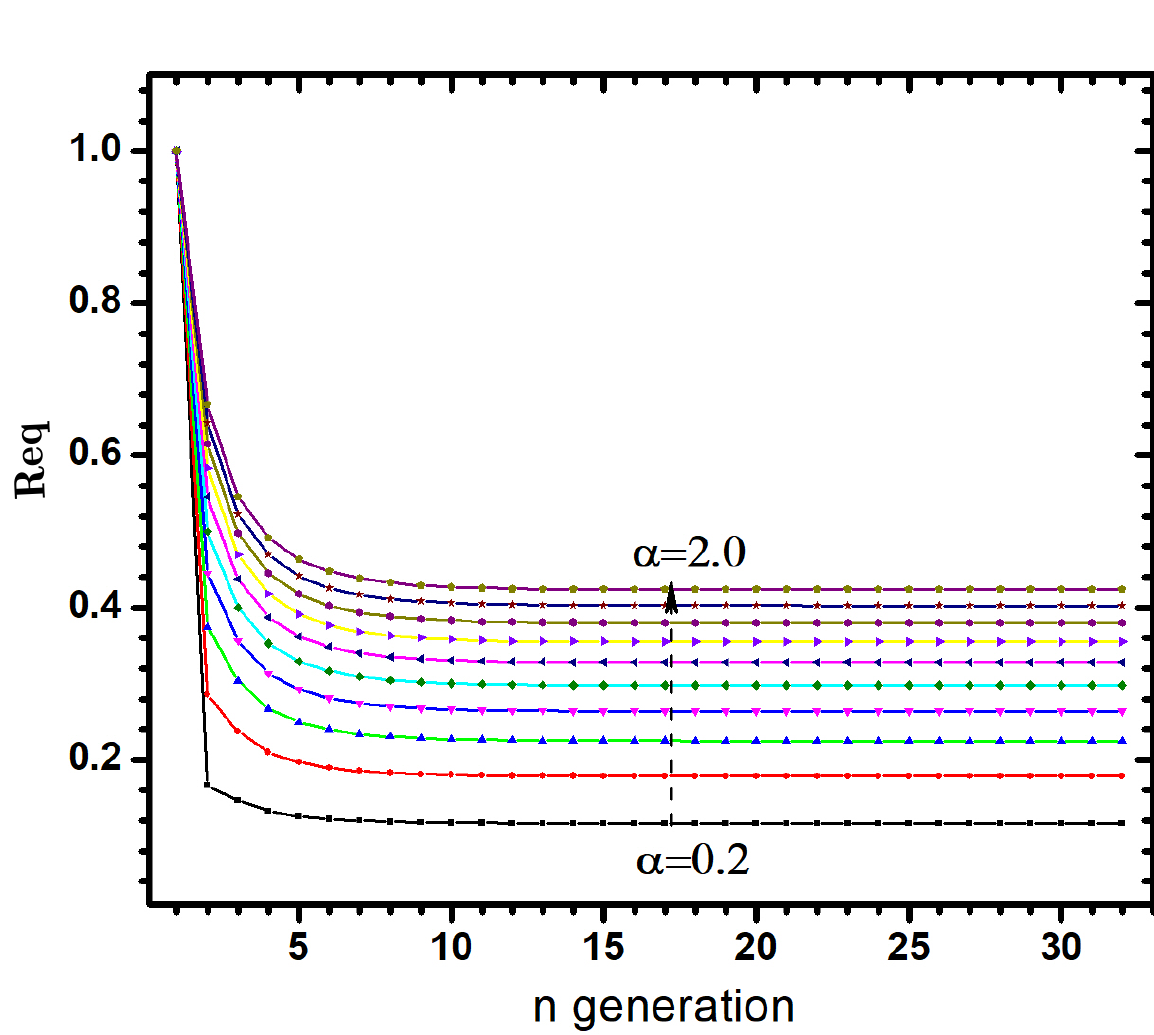}
    \caption{Behavior of the EqR of a type 2 FRN  as a function of $\alpha$ for $r_1 = 1$ is given by \({\cal F}^{-1} (\alpha^{-1})\)} 
    \label{fig:type2-result}
\end{minipage}
\end{figure*}

\section{Self Similarity}\label{sec:self-similar}

This section show that FRN exhibits self-similarity and scale invariance, which mimics a self-similar fractal. The idea of considering resistive circuits as fractal-like structures was illustrated in \cite{bib:Santos}. The critical point for this argument is that for an ideal resistive circuit, the only relevant data are the individual resistances of each resistor and the rule of interconnection among them. Another way to express this is to say that only topological information defines an ideal resistive circuit. Geometric aspects are not relevant in the construction of these models. That means ideal circuits are naturally represented as weighted graphs.

A simple undirected (multi-)graph $G$ is composed of a set $V$ of vertices and a set   $E(V)$ of edges between these vertices. In ideal resistive circuit graphs, the edges represent resistors between two vertices. The ideal resistive circuit graphs of interest are connected in the sense that for each pair of vertices in a connected graph, one can find a sequence of resistors connecting one vertex to the other. Each edge has a weight equals to the resistance that the edge represents. So each circuit can be represented by a graph $G(V,E,r)$, where $r: E \rightarrow \mathbb{R}^+$ is the weight function. Now, we focus on ideal resistive circuit graphs between two external points (or terminals) $A$ and $B$, we denote the set of such graphs by $G_{AB}$ and point out that the EqR gives us a natural measure of these graphs, so for $G(V,E,r) \in G_{AB}$ we write $\left|G(V,E,r)\right|$ to denote the EqR of the circuit $G(V,E,r)$.

For a type 1 FRN connecting points $A$ and $B$, we find that $V=\{A=1,2,3\cdots,+\infty=B\}$, $E = \{ E_{ij}^{(n)}: j=i+1, n=1,\cdot, F_i\}$, where $F_i$ is the $i$-th Fibonacci number, and $r\left(E_{ij}^{(n)} \right)$ is the $n$-th letter of the $i$-th Fibonacci word constructed using $r_1$ and $r_2$. We denote such weighted graph by $G_f(r_1,r_2)$. 

Now, consider the transformation $T$ that relabels  all $r_1, r_2 \rightarrow r_2,r_3$. It maps $G_f(r_1,r_2) \rightarrow G_f(r_2,r_3)$ as illustrated in figure \ref{fig:relabeling}. 
\begin{figure}[ht]
    \centering
    \includegraphics[scale=0.5]{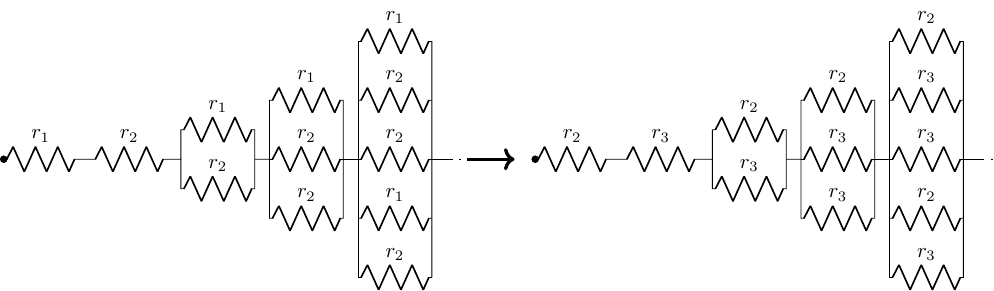}
    \caption{Relabeling type 1 FRN $r_1 \rightarrow r_2$, and $r_2 \rightarrow r_3$, we define a transformation $T$ that preserves the topology of the weighted graph $G$ and only changes its weights. This transformation corresponds to the elimination of the first resistor.}
    \label{fig:relabeling}
\end{figure}

We want to consider the sequence
$$G_f(r_1,r_2) \rightarrow G_f(r_2,r_3) \rightarrow G_f(r_3,r_4) \rightarrow \cdots $$
This sequence of circuits induces the sequence of EqRs $\left| G_f(r_i,r_{i+1}\right|$ given by
$$ r_1 {\cal F}\left(\frac{r_2}{r_1}\right) \rightarrow r_2 {\cal F}\left(\frac{r_3}{r_2}\right) \rightarrow r_3 {\cal F}\left(\frac{r_4}{r_3}\right) \rightarrow \cdots $$
Since $\frac{r_{n+1}}{r_n} \rightarrow \varphi^{-1}$ and $r_n \rightarrow 0$, the EqR approaches $0$ after a large number of relabelings, so the relabel transformation is a contraction in the set of Type 1 FRN. Futhermore, we see that the overall topology does not change and the EqR changes by a scale factor
$s_k = \frac{\left|G_f(r_{k+1},r_{k+2}) \right|}{\left|G_f(r_k,r_{k+1}) \right|} \rightarrow  \frac{1}{\varphi}.$

On another hand, consider the circuit represented by $G_f(r_1,r_2)$ and note that eliminating the first resistor and replacing every parallel set $r_1,r_2$ to the EqR $r_3$, we still have a FRN of type 1, where the first resistor is $r_2$ and the second resistor is now $r_3$. So the relabelling transformation is equivalent to downscaling our network. This illustrates how self-similarity  is  a  key  feature  of  the  FRN.

\section{Some Generalizations}\label{sec:generalizations}

In this section, we analyze some generalizations of the FRN. Although we focus only on generalizations of Type 1 FRN, the analog generalization for Type 2 FRN is straightforward. 

\subsection{FRN Networks of higher order}

Given $p$ initial resistances $r_1,\cdots, r_p$, we can construct a $p$-order FRN using the recurrence series-parallel association depicted in figure \ref{fig:type1-mv}.
\begin{figure}[h!]
    \centering
    \includegraphics[scale=0.7]{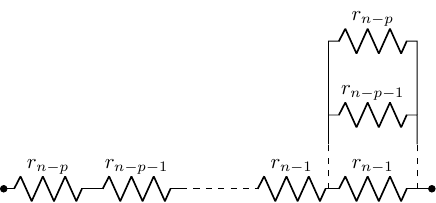}
    \caption{The $n$th term of a type 1 p-order FRN.}
    \label{fig:type1-mv}
\end{figure}

To use the same argument used for the Type 1 FRN, we denote ${\bf r} = (r_1, \cdots, r_p)$ and write $r_k = \frac{r_1}{h_k(\bf r)}$ to see that  
$$ \frac{1}{r_n} = \sum_{k=1}^{k-p} \frac{1}{r_k} \Rightarrow  h_n ({\bf r}) =  \sum_{k=1}^{n-p} h_k({\bf r}).$$ So we find that the numbers $h_k(\bf{r})$ form a $p$-order GFS
$$ h_k({\bf r})  = \frac{r_1}{r_k}, \,\,\,\, 1\leqslant k \leqslant n,\,\,\,\, h_k({\bf r}) = \sum_{j=1}^{p} h_{k-j}({\bf r}) , \forall\,\,k>n+1.$$

As before, we need an expression for the $n$-th term of our $p$-order GFS. The anzats $g_k({\bf r}) = A x^k$ leads to the characteristic equation
\begin{equation}\label{eq:characteristic}
P_p(x)=x^p - x^{p-1} - \cdots - 1   
\end{equation}
The following lemma provides some properties of the roots of the characteristic equation given by \ref{eq:characteristic}. We will show that these roots are located as illustrated in figure \ref{fig:zeros}. It is worth noting that this result is not new. The analysis of the zeros of the characteristic equation given by \ref{eq:characteristic} is of mathematical interest and has already been solved through different methods. For example, \cite{bib:Miles} shows the statement $d$ of the Lemma \ref{l:characteristic} using the Rouchè theorem \cite{bib:Rudin}). In another scenario, \cite{bib:Martin} study the Galois group of \ref{eq:characteristic}. The lemma bellow is included here for completeness and in order to provide a simple and complete analysis of these zeros. 

\begin{figure}[h!]
    \centering
    \includegraphics[scale=0.8]{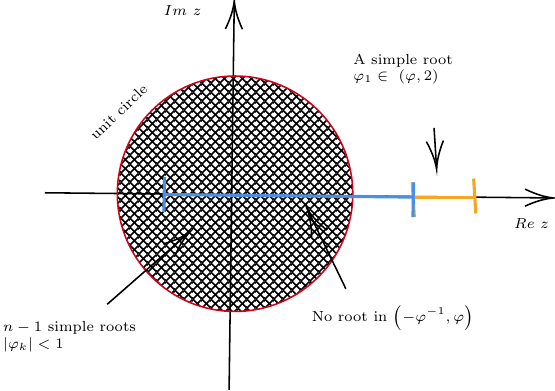}
    \caption{Location of characteristic equation zeros according to lemma  \ref{l:characteristic}}
    \label{fig:zeros}
\end{figure}

\begin{lemma}\label{l:characteristic} The following statements are true for $n>1$
\begin{enumerate}
    \item[(a)] $P_n$ has exactly one positive real root $\varphi_1$, also $\varphi_1$ lies in the open interval $[\varphi,2)$, with $\varphi_1=\varphi$ if and only if $n=2$.
    \item[(b)] If $n$ is odd $P_n$ has no negative real root, and if $n$ is even $P_n$ has exactly one negative real root in $(-1,-\varphi^{-1})$.  
    \item[(c)] All roots of $P_n$ are simple.
    \item[(d)] All other $n-1$ zeros of $P_n$ lies within the unit circle of the complex plane.
    \end{enumerate}
\end{lemma}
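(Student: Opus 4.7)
My plan is to unify most of the lemma by passing to the auxiliary polynomial
\[
Q_n(x) \;:=\; (x-1)\,P_n(x) \;=\; x^{n+1} - 2x^n + 1,
\]
whose telescoped form is much easier to handle. Because $P_n(1) = 1 - n \neq 0$ for $n > 1$, the factor $(x-1)$ only introduces a simple root at $1$, so any root $\alpha \neq 1$ has the same multiplicity in $Q_n$ as in $P_n$, and the location analysis of roots of $P_n$ reduces to that of $Q_n$.

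For part (a), I would compute $Q_n'(x) = x^{n-1}\bigl((n+1)x - 2n\bigr)$, giving a unique positive critical point $x_\ast = 2n/(n+1) \in (1,2)$ at which $Q_n$ switches from strictly decreasing to strictly increasing. Combined with $Q_n(0) = Q_n(2) = 1$ and $Q_n(1) = 0$, this forces $Q_n$ to have exactly one positive zero besides $1$, namely a $\varphi_1 \in (x_\ast, 2)$, hence $P_n$ has a unique positive root in $(1,2)$. To upgrade to $\varphi_1 \geq \varphi$ with equality iff $n = 2$, I would evaluate $P_n(\varphi)$ as a finite geometric sum and use $\varphi - 1 = \varphi^{-1}$ to obtain $P_n(\varphi) = \varphi(1 - \varphi^{n-2})$, which is $\leq 0$ with equality precisely when $n=2$. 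Part (c) then follows cleanly on the same framework: a multiple root of $Q_n$ must zero $Q_n'$, hence equal $0$ or $x_\ast$, but $Q_n(0) = 1$ and $Q_n(x_\ast) < Q_n(1) = 0$ by strict monotonicity on $[1,x_\ast]$.

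For part (b), I would substitute $x = -y$ with $y > 0$ and sum the geometric series, obtaining
\[
P_n(-y) = -\frac{y^{n+1}+2y^n+1}{1+y}\ \text{for odd } n, \qquad P_n(-y) = \frac{y^{n+1}+2y^n-1}{1+y}\ \text{for even } n.
\]
The odd case is manifestly negative on $y > 0$, ruling out any negative root. In the even case the numerator is strictly increasing from $-1$ to $+\infty$ on $[0,\infty)$ and thus has a unique positive zero $y^\ast \in (0,1)$; evaluating it at $y = \varphi^{-1}$ via the identity $\varphi^{-2} = 2 - \varphi$ gives exactly $0$ when $n = 2$ and a strictly negative value for larger even $n$, placing $y^\ast \in (\varphi^{-1}, 1)$ and hence the negative root in $(-1, -\varphi^{-1})$.

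Part (d) will be the most delicate step. For $|z| = 1$, if $Q_n(z) = 0$ then $|z^n(z-2)| = 1$ forces $|z - 2| = 1$, and the unique intersection of the two circles is $z = 1$, which is not a root of $P_n$; so no roots lie on the unit circle. For $|z| > 1$ and $P_n(z) = 0$, I would apply the triangle inequality to $z^n = 1 + z + \cdots + z^{n-1}$ to obtain $|z|^n \leq (|z|^n - 1)/(|z| - 1)$, which rearranges to $Q_n(|z|) \leq 0$, so $|z| \leq \varphi_1$. The hard part is the equality case: equality in the triangle inequality, together with $1$ being among the summands, forces $z$ itself to be positive real, and then $P_n(z) = 0$ forces $z = \varphi_1$. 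This circumvents the Rouché-theorem argument of \cite{bib:Miles}, which I expect would require more careful bookkeeping of boundary zeros at roots of unity.
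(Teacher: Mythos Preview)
Your arguments for (a), (b), and (c) are correct and in several places tighter than the paper's. In particular, your derivative computation $Q_n'(x)=x^{n-1}\bigl((n+1)x-2n\bigr)$ does double duty: it gives the monotonicity profile for (a) and instantly settles (c), since the only complex zeros of $Q_n'$ are $0$ and $x_\ast$, neither of which is a zero of $Q_n$. The paper instead appeals to Descartes' rule for (a) and argues (c) by contradiction through the real/non-real dichotomy; your route is more economical. Your closed evaluation $P_n(\varphi)=\varphi(1-\varphi^{n-2})$ is also neater than the paper's telescoping expression.

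There is, however, a genuine gap in your plan for (d). From $z^n=1+z+\cdots+z^{n-1}$ you obtain $P_n(|z|)\le 0$, equivalently $Q_n(|z|)\le 0$ for $|z|>1$, and you correctly conclude $|z|\le\varphi_1$ with the equality case forcing $z=\varphi_1$. Together with your exclusion of $|z|=1$, this pins down the boundary behaviour, but it does \emph{not} rule out roots in the open annulus $1<|z|<\varphi_1$: on that range $Q_n(|z|)<0$, so your inequality $Q_n(|z|)\le 0$ is perfectly compatible with a root there. You need a second, oppositely directed estimate. The paper obtains it by rewriting $Q_n(z)=0$ as $2z^n=z^{n+1}+1$ and applying the triangle inequality to the right-hand side, which yields $2|z|^n\le|z|^{n+1}+1$, i.e.\ $Q_n(|z|)\ge 0$. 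For $|z|\in(1,\varphi_1)$ this contradicts $Q_n(|z|)<0$ and closes the annulus. With that one extra line your argument for (d) becomes complete and, in fact, streamlines the paper's three-step case split into a single pair of inequalities plus your unit-circle computation.
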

\begin{proof}
\begin{enumerate}
    \item[(a)] From Descartes rule of sign $P_n$ has exactly one positive root. Now, note that    $P_2(\varphi)=\varphi^2-\varphi-1=0$, and for $n>2$ we see that
    $$ P_n(\varphi)= \varphi^{n-2}P_2(\varphi)-\sum_{k=0}^{n-2}\varphi^k = -\sum_{k=0}^{n-2} \varphi^k < 0.$$ Furthermore, for $x \neq 1$, $
    P_n(x) = x^n - \frac{x^{n} -1 }{x - 1} $
therefore $P_n(2) = 2^n - 2^n + 1 = 1$. So $\varphi < \varphi_1 <2$. 
\item[(b)] Except for $x=1$ the polynomial
$G_n(x) = (x-1)P_n(x) = x^{n+1} - 2 x^n +1$ has the same set of roots of $P_n$. Now, note that $G_n(-x) = (-1)^{n+1} x^{n+1} - (-1)^n 2 x^n + 1$.  If $n$ is odd, we have $G_n(-x)=x^{n+1} + 2x^n +1$, and therefore and Descartes rule of sign implies that $G_n(-x)$ no positive root, so $G_n(x)$ has no negative root. If $n$ is even, we have $G_n(-x)=- x^{n+1} - 2x^n +1$, and Descartes rule of sign implies that $G_n(-x)$ has one positive root. Note that
for $n$ even, $G_n(-1)=(-1)^{n+1} - 2(-1)^n +1=-2<0$, also denoting $\psi=-\varphi^{-1}$ and that $P_2(\psi)=0$, so we see
$$G_n(\psi)=(\psi-1)\left(\psi^{n-2}P_2(\psi)- \frac{\psi^{n-2}-1}{\psi-1}\right)=1-\psi^{n-2} \geqslant 0$$
where the equality holds for $n=2$.
\item[(c)] Let $q_n(x)$be a $n$ degree polynomial 
with roots $\alpha_j$ ($j=1,\cdots, n$). We can write $q_n (x) = \prod_j ( x- \alpha_j )$, note that its derivative is 
$q'_n(x) = a_n \sum_{k} \prod_{j \neq k} ( x- \alpha_j )$, then $q'(\alpha_i) = a_n \prod_{j\neq i} ( \alpha_i - \alpha_j )$. So if $\alpha_i$ is degenerated then $q'(\alpha_i) =0$. Now, assume $\varphi_j$ is a degenerated root of $G_n$, clearly $\varphi_j \not \in \mathbb{R}$ since all real roots must be simple by (a), (b)  above and $x=0$ is not a root. But $G'(\varphi_j) =0$ implies that \( (n+1) \varphi_j= 2 n\) a contradiction.

\item[(d)] We prove this statement in three parts. First, (i) we show that there is no other zero $\psi$ such that $|\psi|> \varphi_1$. Second, (ii) we prove that there is no zero such that \(1 < |\psi| < \varphi_1\). Finally, (iii) we prove that there is only one zero $\psi$ such that $|\psi|=1$ or $|\psi|=\varphi_1$ that is $\psi = \varphi_1$. The proof of (d) follows from these three statements.

(i) Note that since $P_n(x)$ has exactly one positive real zero $\varphi_1$ and that $\displaystyle \lim_{x \rightarrow +\infty} P_n(x) = +\infty$, then for $x > \varphi_1$,  $P_n(x)>0$. We proceed by contradiction and assume the existence of $\psi \neq \varphi_1$ that is a root of $P_n$ and $|\psi| > \varphi_1$. Since (by hypothesis) $\psi$ is a root then \(\psi^n = \psi^{n-1} + \cdots + \psi + 1\), so the triangle inequality implies that \( |\psi|^n \leqslant |\psi|^{n-1} + \cdots + |\psi| + |1| \), and therefore 
\(P_n(|\psi|) = |\psi|^n - \left( |\psi|^{n-1} + \cdots + |\psi| + |1| \right) \leqslant 0,\)
a contradiction with \(P_n(x)>0\) for $x > \varphi_1$.

(ii) Since $P_n$ has only one positive root $\varphi_1$ and $P_n(x)<0$ for $x\in[0,\varphi_1)$ then $G_n(x) = (x-1)P_n(x)$ has exactly two positive roots $1$ and $\varphi_1$ and $G_n(x) < 0$ for $x \in [0,\varphi_1)$. Now, assume there is a zero $\psi$ of $P_n$ such that \(1 < |\psi| < \varphi_1\). So \(G_n(\psi) = \psi^{n+1} - 2 \psi^n +1 = 0,\) then triangle inequality implies that \( 2|\psi^n| \leqslant |\psi^{n+1}| + 1\). Therefore \(G_n(|\psi|) = |\psi^{n+1}| - 2|\psi^n| + 1 \geqslant 0, \)
a contradiction with \(G_n(x) < 0 \) for for $x \in [0,\varphi_1)$.

(iii) If we have \(G_n\left(e^{i \theta}\right) =0\) then \(e^{i \theta} + e^{- i n \theta} = 2\). Now assume there is $\theta \neq 0$ such that this equality holds, then we must have $\theta \equiv -n \theta + 2 k \pi$, where $k$ is an integer. So we find $\theta_k = \frac{2k}{n+1} \pi$, so there are $n+2$ distinct roots of $G_n$, namely $\varphi_1$ and $n+1$ roots $e^{-i \theta_k}$, a contradiction since $G_n$ only has $n+1$ distinct roots by the fundamental theorem of algebra. So the only root of $G_n(x)$ in the unit circle is $x=1$ and therefore $P_n$ has no roots in the unit circle. If we have \(G_n\left( \varphi_1 e^{i \theta}\right) =0\) then \( \varphi_1 e^{i \theta} + \varphi_1^{-n}e^{- i n \theta} = 2\). Again, if there is $\theta \neq 0$ such that this equality holds, then there are $n+2$ distinct roots of $G_n$, namely $1$ and $n+1$ roots $\varphi_1 e^{-i \theta_k}$, a contradiction.
\end{enumerate}
\end{proof}

Due to Lemma \ref{l:characteristic}.(d), we have the Binet like formula \begin{equation}\label{eq:general-gibonacci}
h_k({\bf r}) = A_1 \varphi_1^{k-1} + A_2 \varphi_2^{k-1} + \cdots + A_p \varphi_p^{k-1},    
\end{equation}
where the constants \(\varphi_k\) ($k=1,\cdots, p$) are the roots of $P_p$ and the components $A_k$ of ${\bf A} = (A_1, \cdots, A_n)$ are functions of \({\bf r}\) given by \({\bf A} = V^{-1} {\bf I} \), where ${\bf I} = \left(1,  \frac{r_1}{r_2}\cdots, \frac{r_1}{r_p}\right)$, and $V$ is the Vandermonde matrix with the $k$ line given by $[\varphi_1^{k-1}\,\,\varphi_2^{k-1}\,\, \cdots \,\, \varphi_p^{k-1}]$. The existence of the inverse $V^{-1}$ is a consequence of zero not be a root of $P_p$ and standard properties of Vandermonde matrices.  

Since the $p$ initial resistances are positive, the recurrence relation implies that $h_k({\bf r})$ is a strictly increasing sequence, so we must have $A_i>0$, $i=1,\cdots,n$. It is clear from Lemma \ref{l:characteristic}.(a,d) that \( r_n = \frac{r_1}{h_k({\bf r})} \rightarrow \frac{r_1}{A_1({\bf r}) \varphi_1^{k-1}} \rightarrow 0\). The following lemma shows that the EqR of the entire network
$$ R_{eq} ({\bf r}) = r_1 \sum_{k} \frac{1}{h_k({\bf r})} = r_1 {\cal G}({\bf r})$$ converges pointwisely. 
\begin{lemma}\label{l:multi-convergence}
The series \({\cal G} \left({\bf r} \right) \) converges pointwisely for all ${\bf r}$ such that $r_k \geqslant 0$, $k=1,...,p$.
\end{lemma}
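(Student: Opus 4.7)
The plan is to exploit the Binet-like formula \eqref{eq:general-gibonacci} together with the root structure established in Lemma~\ref{l:characteristic}: the dominant root $\varphi_1>1$ is strictly larger in modulus than all other roots, which all satisfy $|\varphi_j|<1$. From this, $h_k(\mathbf{r})$ grows exactly like $A_1 \varphi_1^{k-1}$ to leading order, so $1/h_k$ decays geometrically and either the ratio test or a direct comparison with a geometric series settles convergence.

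First I would dispose of the degenerate case. If some initial $r_j=0$, the corresponding parallel stage is a short circuit, so from the recurrence $1/r_n=\sum_{k=1}^{n-p} 1/r_k$ one gets $r_n=0$ for all $n>j$; equivalently, $1/h_k=r_k/r_1$ vanishes from some index on, and the series $\mathcal{G}(\mathbf{r})$ reduces to a finite sum which trivially converges. Assume henceforth $r_k>0$ for $k=1,\ldots,p$, so $h_k>0$ for all $k$ (positivity is preserved by the recurrence, since it only adds earlier terms).

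Next I would use the Binet formula to write
\[
\frac{h_k(\mathbf{r})}{\varphi_1^{k-1}} = A_1 + \sum_{j=2}^{p} A_j \left(\frac{\varphi_j}{\varphi_1}\right)^{k-1}.
\]
Since $|\varphi_j|<1<\varphi_1$ for $j\geqslant 2$ by Lemma~\ref{l:characteristic}(d), the sum on the right tends to $0$, so $h_k/\varphi_1^{k-1}\to A_1$. From this I would apply the ratio test directly:
\[
\frac{1/h_{k+1}}{1/h_k} = \frac{h_k}{h_{k+1}} \longrightarrow \frac{1}{\varphi_1} < 1,
\]
which would give absolute (hence pointwise) convergence of $\mathcal{G}(\mathbf{r})$. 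Equivalently one can use $h_k\sim A_1\varphi_1^{k-1}$ and compare $\sum 1/h_k$ with the convergent geometric series $\sum \varphi_1^{-(k-1)}$.

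The one subtle point — the main obstacle of the proof — is justifying that $A_1>0$ (so that the ratio test is actually being applied to eventually nonzero terms, and so the leading asymptotic is genuine). The paper asserts this from the fact that $h_k(\mathbf{r})$ is strictly increasing, but I would make the argument explicit as follows. From the Binet expression, $A_1 = \lim_{k\to\infty} h_k/\varphi_1^{k-1}\geqslant 0$ since each $h_k>0$. Suppose for contradiction $A_1=0$; then $h_k = \sum_{j\geqslant 2} A_j\varphi_j^{k-1}$ with all $|\varphi_j|<1$, forcing $h_k\to 0$. But $h_k$ is a strictly increasing sequence of positive terms with $h_1=1$, so $h_k\geqslant 1$ for all $k$, a contradiction. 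Hence $A_1>0$, completing the argument.
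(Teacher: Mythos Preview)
Your argument is correct and follows essentially the same route as the paper: handle the short-circuit case separately, then use the Binet formula together with Lemma~\ref{l:characteristic}(a,d) to get $h_k/h_{k+1}\to 1/\varphi_1<1$ and apply the ratio test. Your explicit justification that $A_1>0$ is a welcome addition (the paper's blanket claim that all $A_i>0$ is in fact wrong, since the subdominant $A_i$ are generally complex), but note one small slip: $h_k$ need not be increasing on the first $p$ terms (e.g.\ $h_2=r_1/r_2$ may be less than $h_1=1$), so ``$h_k\geqslant 1$ for all $k$'' can fail; the fix is to observe that for $k>p$ the recurrence gives $h_k=h_{k-1}+\cdots+h_{k-p}>h_{k-1}$, so $h_k\geqslant h_{p+1}>0$ eventually, which still contradicts $h_k\to 0$.
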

\begin{proof} 
If any of the $p$ initial resistances are zero, then there is a short circuit in every $n$-th generation with $n \geqslant p+1.$ Therefore
$ {\cal G}\left({\bf r}_0 \right) = \sum_{k=1}^p r_k$ for ${\bf r}_0 \in \{ {\bf r} : r_k = 0, \text{ for some } k=1,\cdots,p\}.$ Otherwise, Lemma \ref{l:characteristic}.d implies that $|\varphi_{i\neq 1}| < 1$, then
$ \frac{r_{n+1}}{r_{n}} = \frac{h_{n}({\bf r})}{h_{n+1}({\bf r})} \rightarrow \frac{1}{\varphi_1} < 1,$
this last inequality is a consequence of Lemma \ref{l:characteristic}.(a). So the ratio test implies the punctual convergence.
\end{proof} 

Finally, we can show the uniform convergence of the p-order FRN.
\begin{theorem} The series \({\cal G} \left({\bf r}\right) \) converges uniformly for ${\bf r} \geqslant 0$
\end{theorem}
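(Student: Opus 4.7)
The plan is to extend the Weierstrass M-test argument of Theorem \ref{th:uniform} to the $p$-order setting by exploiting linearity of the recurrence in its initial data. Since the $p$-order recurrence $h_k = h_{k-1} + \cdots + h_{k-p}$ is linear, and the initial data are $(h_1, h_2, \ldots, h_p) = (1,\, r_1/r_2,\, \ldots,\, r_1/r_p)$, for each $k$ we can decompose
$$h_k(\mathbf{r}) = \sum_{i=1}^p C_{k,i}\, h_i,$$
where $C_{k,i}$ is the $k$-th term of the $p$-order GFS with initial data $C_{j,i} = \delta_{ij}$ for $1 \leq j \leq p$. These auxiliary coefficients are nonnegative integers depending only on $k$ and $i$, never on $\mathbf{r}$.

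The heart of the proof is a uniform lower bound on $h_k(\mathbf{r})$ coming from the single initial entry $h_1 = 1$ which is common to every $\mathbf{r}$. I would show by induction on $k$ that $C_{k,1} \geq 1$ for every $k \geq p+1$: the base case $C_{p+1,1} = \sum_{j=1}^p C_{p+1-j,1} = C_{1,1} = 1$ holds since $C_{j,1} = 0$ for $2 \leq j \leq p$, while the inductive step uses $C_{k,1} \geq C_{k-1,1}$ (the recurrence sums $p$ nonnegative terms, one of which is $C_{k-1,1}$). Since all $C_{k,i}$ and $h_i$ are nonnegative, this gives $h_k(\mathbf{r}) \geq C_{k,1}$, hence
$$\frac{1}{h_k(\mathbf{r})} \leq \frac{1}{C_{k,1}} \qquad \text{for every } \mathbf{r} \geq 0 \text{ and every } k \geq p+1.$$
The sequence $(C_{k,1})_k$ is itself a $p$-order GFS, so formula \ref{eq:general-gibonacci} together with Lemma \ref{l:characteristic}(a, d) gives $C_{k,1} \sim A_1\, \varphi_1^{k-1}$ with $\varphi_1 > 1$ (the coefficient $A_1$ of the dominant mode must be strictly positive, since otherwise $C_{k,1} \to 0$, contradicting $C_{k,1} \geq 1$). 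Therefore $\sum_k 1/C_{k,1}$ converges geometrically.

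The Weierstrass M-test \cite{bib:Rudin} then delivers uniform convergence of the tail $\sum_{k \geq p+1} 1/h_k(\mathbf{r})$ on $\{\mathbf{r} \geq 0\}$, and since the first $p$ terms form a fixed finite sum they do not spoil uniform convergence of the full series $\mathcal{G}(\mathbf{r})$; the degenerate short-circuit case where some $r_i = 0$ is already handled by Lemma \ref{l:multi-convergence} and is consistent with the bound above under the natural convention $1/h_k = 0$ when $h_k = +\infty$. The main obstacle is precisely the uniform lower bound in the second paragraph: because $h_2, \ldots, h_p$ can be arbitrary nonnegative values, they cannot supply an $\mathbf{r}$-independent bound on $h_k$, and one must rely on extracting the single guaranteed unit contribution from $h_1 = 1$ and on its geometric amplification by the combinatorial coefficients $C_{k,1}$.
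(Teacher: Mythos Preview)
Your proposal is correct and is essentially the paper's own argument: your coefficients $C_{k,i}$ are exactly the paper's basis sequences $w_k(\hat e_i)$, the key inequality $h_k(\mathbf{r}) \geq C_{k,1}$ is the paper's bound $h_k(\mathbf{r}) \geq w_k(\hat e_1)$, and both finish with the Weierstrass M-test after checking that $\sum_k 1/C_{k,1}$ converges (you via the Binet asymptotic, the paper via the ratio test). Your inductive verification that $C_{k,1} \geq 1$ for $k \geq p+1$ is in fact a bit more careful than the paper, which simply asserts positivity of the $w_k(\hat e_i)$ for $k>p$.
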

\begin{proof}
Let $\hat e_k = (\delta_{1k}, \cdots, \delta_{ip})$ and $w_k (\hat e_i) = \delta_{ik}$, $k=1,\cdots,p$ and $w_{k} (\hat e_i) = \sum_{j=1}^p w_{k-j} (\hat e_i)$ for $k>p$. Each sequence $w_k(\hat e_i)$ is a p-order GFS. Since a linear combination of p-order GFS is a p-order GFS, we see that
\begin{equation}\label{eq:shift}
h_k({\bf r}) = w_k(\hat e_1) + \frac{r_1}{r_2} w_k(\hat e_2) + \frac{r_1}{r_3} w_k(\hat e_3) + \cdots + \frac{r_1}{r_p} w_k(\hat e_p).
\end{equation}
So we can write
$${\cal G} \left({\bf r}\right) = 1 + \frac{r_2}{r_1} + \cdots + \frac{r_p}{r_1} + \sum_{k=p+1}^{+\infty} \frac{1}{w_k(\hat e_1) + \frac{r_1}{r_2} w_k(\hat e_2) + \cdots + \frac{r_1}{r_p} w_k(\hat e_p)}.$$

So for $k>p$, $h(\hat e_i)>0$ and we see that  $\frac{1}{h_{k}({\bf r})} \leqslant \frac{1}{w_k (\hat e_1)}$. On the other hand, $g_k$ is a p-order GFS and therefore$\frac{w_{k}}{w_{k+1}} \rightarrow \frac{1}{\varphi_1} < 1$ as $k\rightarrow +\infty$.
 So $\sum_{k=p}^{+\infty} w_k^{-1}$ converges by the ratio test and therefore  \({\cal G} \left({\bf r}\right) \) converges uniformly for ${\bf r} \geqslant 0$ by Weierstrass M test.

\end{proof}

\subsection{Other Recursive Resistor Networks}

The realm of combinatorics has a whole plethora of important recursive sequences such as the Stirling numbers of the first and second kinds, Bell numbers, central Delannoy numbers, Euler and Genocchi  numbers \cite{bib:Anjos}. The FRN logic of construction can be applied to these sequences to obtain a variety of interesting resistor networks.

As an example, consider the Pell numbers $p_n$ that arise historically in the rational approximation to $\sqrt{2}$, more precisely because
\( \frac{p_{n+1}-p_n}{p_n} \rightarrow \sqrt{2}.\)
For $n\geqslant 2$, we have $p_n = 2p_{n-1}+p_{n-2}$. Following the type 1 logic, we replace each sum by a parallel association and set up each term in a serial association. The recursive procedure to construct a Type 1 Pell Resistor Network  is shown in the inset of figure ~\ref{fig:pell-plot}. Setting $r_1 = 1$, $r_2 = \alpha$, the arguments of the previous sections can be used almost word by word and replacing the golden ratio $\varphi$ to the silver ratio $\phi =\frac{1+\sqrt{2}}{2}$. The resulting behavior is shown in figure. \ref{fig:pell-plot}.
\begin{figure}[h!]
    \centering
    \includegraphics[scale=0.4]{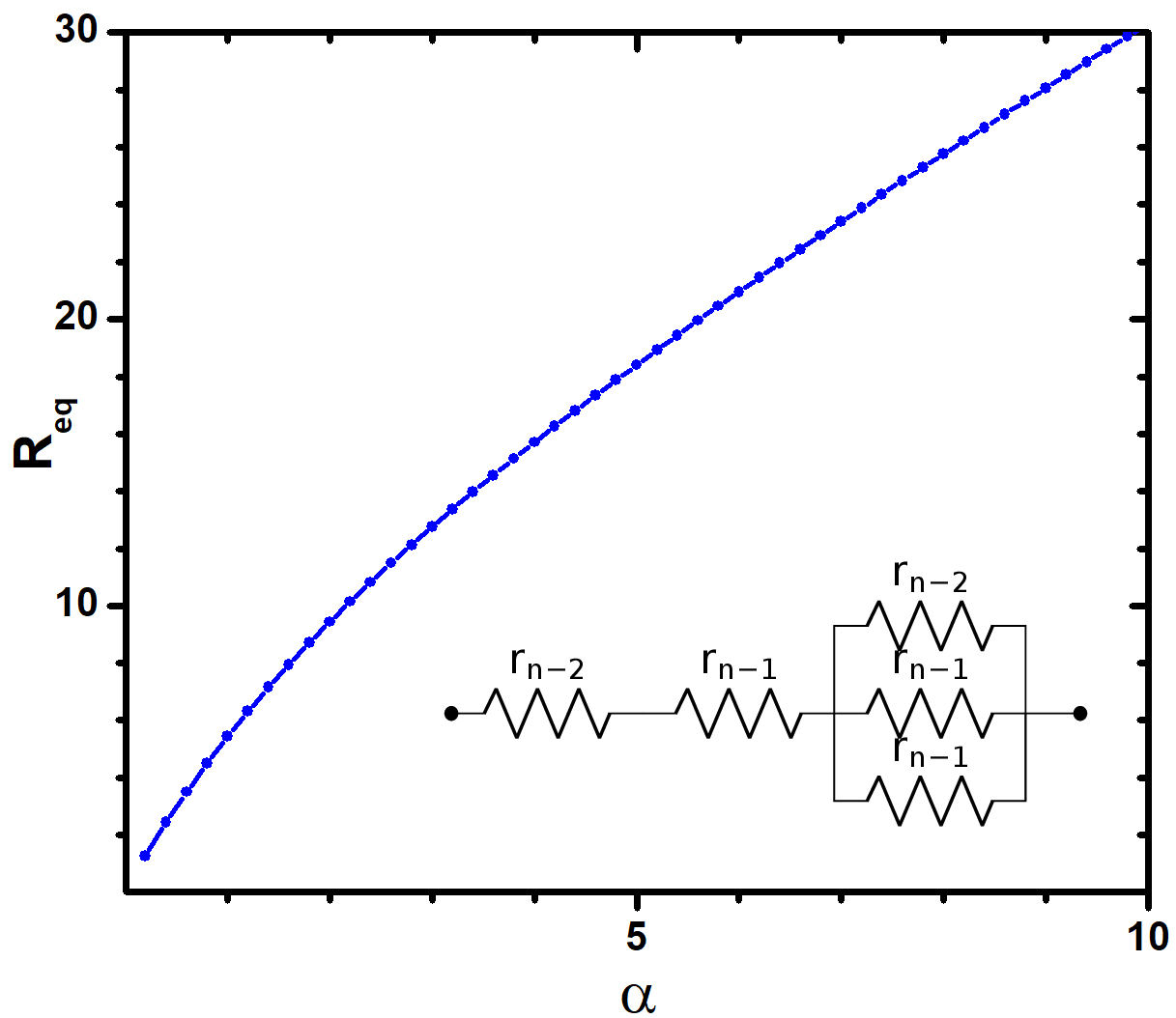}
    \caption{The equilalent resistance of a Type 1 Pell resistor network  as a function of \(\alpha\) for $r_1 = 1$, $r_2=\alpha$. The inset shows the recursive procedure leading to a Type 1 Pell Resistor Network. } 
    \label{fig:pell-plot}
\end{figure}

The Bell sequence gives us another example. The Bell number \(B_n\) counts the number of different ways to partition a set that has exactly \(n\) elements. The Bell sequence follows the recurrence relation \cite{bib:Anjos}
$$B_0=B_1=1,\,\,\,B_{n+1} = \sum_{k=0}^n \frac{n!}{(n-k)! k!} B_k.$$
A possible approach to create a Type 1 Bell Resistor Network is to take $r_{n+1}$ to be a parallel setting of $2^n$ resistors, where for each $k=0,\cdots,n$, we have \(\frac{n!}{(n-k)!k!}\)  resistors $r_k$. The setup for the $n+1$-th group is shown in the inset of figure \ref{fig:bell-plot}. The methods of the previous section does not apply due to the absence of a Binet like formula. Setting $r_0=1$ and $r_1=\alpha$, figure \ref{fig:bell-plot} shows the behavior of the EqR of a Type 1 Bell Resistor Network. 

\begin{figure}[h!]
\centering
\includegraphics[scale=0.41]{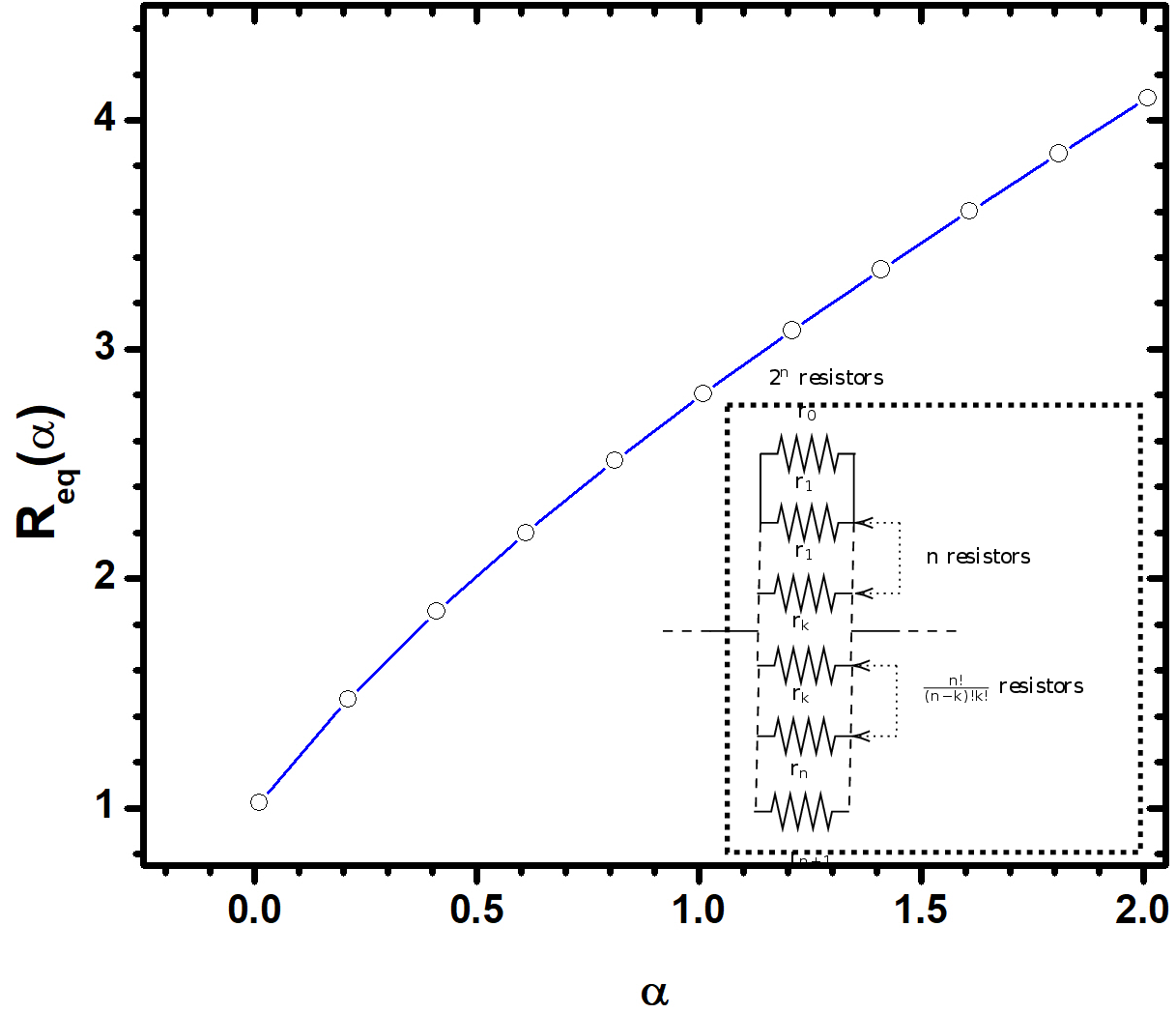}
\caption{Numerical simulation of the EqR of a Type 1 Bell resistor network  as a function of \(\alpha\) for $r_1 = 1$, $r_2=\alpha$ up to $n=20$. The inset shows the construction of the $r_n$ resistance in a Type 1 Bell Resistor Network. } 
\label{fig:bell-plot}
\end{figure}

\section{Conclusions}\label{sec:final}

We propose two kinds of infinite resistor networks based on the  Fibonacci sequence:   parallel of serial(type 1) or serial of parallel(type 2) resistor network. We show that the the network’s EqRs converge uniformly in the parameter $\alpha=\frac{r_2}{r_1}$. This FRN (and the generalizations we discussed)  provides an interesting and non-trivial problem that could be used in Physics and Mathematics teaching.

Furthermore, we show that these networks exhibit self-similarity and scale invariance, which mimics a self-similar fractal. This property suggests that applications in areas that can be investigated with fractal-like networks such as electrical properties of percolation clusters and the electric response of inhomogeneous materials. In fact, these fractal like properties can be connected with a large number of phenomena, for example, recently it was demonstrated that the 2+1 dimensional Kardar-Parisi-Zhang equation exhibits self-similarity and scale invariance with fractal dimension  $d_f=\varphi$, directly related with the growth exponents \cite{GomesFilho21b, Anjos21}.

Moreover,  recent fractal  analysis~\cite{Lima24} of the correlation function in a second  order phase transition close to the critical point shows that the Fisher exponent
\begin{equation}
\label{dff}
 \eta=d-d_f.
\end{equation}
 represents the deviation from the integer dimension. Here $d_f$ is the fractal interface dimension originated by scale invariance~\cite{Lima24}.

\bmhead{Acknowledgments}

 This work was partially financed by the Brazilian Research Agency CNPq. The authors also thank the Distrito Federal Research Foundation FAPDF for financial and equipment support (Edital 04/2017 and Edital 09/2022). D. L. Azevedo acknowledges the support by the Mato Grosso Research Foundation FAPEMAT for financial support through the Grant PRONEX CNPq/FAPEMAT 850109/2009, CENAPAD-SP, and CNPq (Proc. 315623/2021-7-PQ-2). F.A.O. acknowledges the Grant No. 303119/2022-5 and the Funda\c{c}\~ao de Apoio a Pesquisa do Rio de Janeiro (FAPERJ), Grant No. E-26/203953/2022. 

 \bmhead{Data availability statement}
 
All data that support the findings of this study are included within the article.

\section{Declarations}

\textbf{Funding}: David L. Azevedo thanks the Brazilian Research Agencies for financial support: FAPEMAT (PRONEX CNPq/ FAPEMAT 850109/2009), FAP-DF (Edital 04/2017 and Edital 09/2022), and CNPq (Grant No. 315623/2021-7). Fernando A. Oliveira thanks CNPq, grant 303119/2022-5 and FAPERJ, grant E-26/203953/2022 for financial support.
\\
\textbf{Conflict of Interest }: 
The authors declare no competing interests.
\\
\textbf{Author contributions:} These authors contributed equally to this work.


 \bibliography{references}

\end{document}